\def\boldpi{\boldsymbol{\pi}}
\def\boldsigma{\boldsymbol{\sigma}}
\def\sdotp{\boldsigma\cdot\boldpi}
\def\sdotE{\boldsigma\cdot\mathbf{E}}
\def\sdotB{\boldsigma\cdot\mathbf{B}}
\def\Edotp{\mathbf{E}\cdot\boldpi}
\def\Bdotp{\mathbf{B}\cdot\boldpi}
\def\Etimesp{\mathbf{E}\times\boldpi}
\def\calH{\mathcal{H}}
\def\calX{\mathcal{X}}
\def\calY{\mathcal{Y}}
\def\calZ{\mathcal{Z}}
\def\mbeta{\tilde{\beta}}
\def\malpha{\tilde{\alpha}}
\def\mbalpha{\tilde{\boldsymbol{\alpha}}}
\def\mbsigma{\tilde{\boldsymbol{\sigma}}}
\def\mgamma{\tilde{\gamma}}
\def\bigO{O}
\newcommand{\eqnref}[1]{(\ref{#1})}
\theoremstyle{definition}
\newtheorem{theorem}{Theorem}
\begin{document}

\title
{Exact correspondence between classical and Dirac-Pauli spinors in the weak-field limit of static and homogeneous electromagnetic fields}

\thanks{This article serves as a brief summary of the main result of \cite{Chen2014b}.}

\author{Dah-Wei Chiou}
\email{dwchiou@gmail.com}
\affiliation{Department of Physics, National Taiwan Normal University, Taipei 11677, Taiwan}
\affiliation{Center for Condensed Matter Sciences, National Taiwan University, Taipei 10617, Taiwan}

\author{Tsung-Wei Chen}
\email{twchen@mail.nsysu.edu.tw} \affiliation{Department of Physics, National Sun Yat-sen University, Kaohsiung 80424, Taiwan}


\begin{abstract}
It has long been speculated that the Dirac or, more generally, the Dirac-Pauli spinor in the Foldy-Wouthuysen (FW) representation should behave like a classical relativistic spinor in the low-energy limit when the probability of particle-antiparticle pair creation and annihilation is negligible. In the weak-field limit of static and homogeneous electromagnetic fields, by applying the method of direct perturbation theory inductively on the orders of $1/c$ in the power series, we rigorously prove that it is indeed the case: the FW transformation of the Dirac-Pauli Hamiltonian is in full agreement with the classical counterpart, which is the sum of the orbital Hamiltonian for the Lorentz force equation and the spin Hamiltonian for the Thomas-Bargmann-Michel-Telegdi equation.
\end{abstract}

\pacs{03.65.Pm, 11.10.Ef, 71.70.Ej}

\maketitle

The relativistic quantum theory for a spin-$1/2$ particle subject to external electromagnetic fields is described by the Dirac equation  \cite{Dirac1928,Dirac1982}
\begin{equation*}\label{Dirac eq}
\mgamma^\mu D_\mu|\psi\rangle+i\frac{mc}{\hbar}|\psi\rangle=0,
\end{equation*}
where the covariant derivative $D_\mu$ is given by
$D_\mu:=\partial_\mu +\frac{iq}{\hbar c}A_\mu \equiv -\frac{i}{\hbar}\pi_\mu :=-\frac{i}{\hbar}\left(p_\mu-\frac{q}{c}A_\mu\right)$
with the canonical 4-momentum $p^\mu=(E/c,\mathbf{p})$ and the kinematic 4-momentum $\pi^\mu=(W/c,\boldsymbol{\pi})$, and $\mgamma^\mu$ are the $4\times4$ Dirac matrices.
The Dirac equation gives rise to the magnetic moment with the $g$-factor given by $g=2$. To incorporate any anomalous magnetic moment (i.e.\ $g\neq2$), one can augment the Dirac equation into the Dirac-Pauli equation \cite{Pauli1941} with an additional term explicitly dependent on the field strength $F_{\mu\nu}:=\partial_\mu A_\nu-\partial_\nu A_\mu$:
\begin{equation*}\label{Dirac-Pauli eq}
\mgamma^\mu D_\mu|\psi\rangle+i\frac{mc}{\hbar}|\psi\rangle
+\frac{i\mu'}{2c}\mgamma^\mu\mgamma^\nu F_{\mu\nu}|\psi\rangle=0,
\end{equation*}
which amounts to including the anomalous magnetic moment given by $\mu'=\gamma'_m\hbar/2$ (where $\gamma'_m$ is called the anomalous gyromagnetic ratio).
The Pauli-Dirac equation can be cast in the Hamiltonian formalism as
\begin{equation*}\label{Dirac-Pauli eq w H}
i\hbar\frac{\partial}{\partial t}|\psi\rangle = \tilde{\calH}|\psi\rangle
\end{equation*}
with the Dirac Hamiltonian $\tilde{H}$ and the Dirac-Pauli Hamiltonian $\tilde{\calH}$ read as
\begin{eqnarray*}\label{H Dirac and Dirac-Pauli}
\label{H Dirac}
\tilde{H} &=& mc^2\mbeta+c\,\mbalpha\cdot\left(\mathbf{p}-\frac{q}{c}\mathbf{A}\right)+q\phi, \\
\label{H Dirac-Pauli}
\tilde{\calH} &=& \tilde{H} +\mu'\left(-\mbeta\mbsigma\cdot\mathbf{B}+i\mbeta\mbalpha\cdot\mathbf{E}\right), 
\end{eqnarray*}
where the $4\times4$ matrices are given explicitly by
\begin{equation*}
\mbeta=\left(
         \begin{array}{cc}
           \openone & 0 \\
           0 & -\openone \\
         \end{array}
       \right),\
\mbalpha=\left(
           \begin{array}{cc}
             0 & \boldsigma \\
             \boldsigma & 0 \\
           \end{array}
         \right),\
\mbsigma=\left(
           \begin{array}{cc}
             \boldsigma & 0 \\
             0 & \boldsigma \\
           \end{array}
         \right),
\end{equation*}
and $\boldsigma=(\sigma_x,\sigma_y,\sigma_z)$ are the $2\times2$ Pauli matrices.
Accordingly, the $\mgamma$ matrices are given by $\mgamma^0=\mbeta$ and $\mgamma^i=\mbeta\malpha^i$.

Rigorously, the Dirac equation is self-consistent only in the context of quantum field theory as particle-antiparticle pairs can be created and annihilated. (And accordingly, the Dirac-Pauli equation accounting for the anomalous magnetic moment is adequate only at the phenomenological level.) In the weak-field limit when the particle's energy interacting with electromagnetic fields is much smaller than the Dirac energy gap $2mc^2$, the probability of pair creation and annihilation is negligible and it is expected that the particle and antiparticle can be treated separately without taking into account the field-theory interaction between them.
The Foldy-Wouthuysen (FW) transformation is the method for the particle-antiparticle decomposition via a series of successive unitary transformations, each of which block-diagonalizes the Dirac Hamiltonian to a certain order of $1/m$ \cite{Foldy1950} (also see \cite{Strange2008} for a review). In the same spirit, many different approaches have been developed for various advantages and most of them can be straightforwardly applied to the Dirac-Pauli Hamiltonian (see \cite{Silenko1995-analysis,Reiher2009 Book} for reviews).

On the other hand, the classical (non-quantum) dynamics for a relativistic point particle endowed with charge and intrinsic spin in electromagnetic fields is well understood. The orbital motion, which is governed by the Lorentz force equation, and the precession of spin, which is govern by the Thomas-Bargmann-Michel-Telegdi equation \cite{Thomas1927,Bargmann1959} (see \cite{Jackson1999} for a review), are simultaneously described by the total Hamiltonian (see \cite{Chen2014} for more details)
\begin{eqnarray}\label{H cl}
&&H(\mathbf{x},\mathbf{p},\mathbf{s};t)\\
&=&H_\mathrm{orbit}(\mathbf{x},\mathbf{p};t)
+H_\mathrm{spin}(\mathbf{s},\mathbf{x},\mathbf{p};t)+\bigO(F_{\mu\nu}^2,\hbar^2)\nonumber
\end{eqnarray}
with the orbital Hamiltonian given by
\begin{equation}\label{H orbit}
H_\mathrm{orbit}(\mathbf{x},\mathbf{p};t)=\sqrt{m^2c^4+c^2\boldpi^2}+q\phi(\mathbf{x},t)
\end{equation}
and the spin Hamiltonian given by
\begin{widetext}
\begin{eqnarray}\label{H spin}
&&H_\mathrm{spin}(\mathbf{s},\mathbf{x},\mathbf{p};t)\\
&=&-\mathbf{s}\cdot\left[
\left(\gamma'_m+\frac{q}{mc}\frac{1}{\gamma_{\boldpi}}\right)\mathbf{B}(\mathbf{x})
\right.
\mbox{}-\gamma'_m\frac{1}{\gamma_{\boldpi}(1+\gamma_{\boldpi})}
\left(\frac{\boldpi}{mc}\cdot\mathbf{B}(\mathbf{x})\right)\frac{\boldpi}{mc}
\left.
\mbox{}-\left(\gamma'_m\frac{1}{\gamma_{\boldpi}} +\frac{q}{mc}\frac{1}{\gamma_{\boldpi}(1+\gamma_{\boldpi})}\right)
\left(\frac{\boldpi}{mc}\times\mathbf{E}(\mathbf{x})\right)
\right],\nonumber
\end{eqnarray}
\end{widetext}
where $\mathbf{s}$ is the intrinsic spin and the Lorentz factor associated with the kinematic momentum $\boldsymbol{\pi}$ is defined as $\gamma_{\boldsymbol{\pi}}:=\sqrt{1+(\boldsymbol{\pi}/(mc))^2}$.
The Hamiltonian $H(\mathbf{x},\mathbf{p},\mathbf{s};t)$ provides a low-energy description of the relativistic spinor dynamics.

It has long been conjectured that, in the weak-field limit, the Dirac or Dirac-Pauli Hamiltonian, after block diagonalization, should agree with the classical hamiltonian $H(\mathbf{x},\mathbf{p},\mathbf{s};t)$ up to corrections of $\bigO(F_{\mu\nu}^2,\hbar^2)$ (except that the spin of the Dirac-Pauli spinor is quantized). The classical-quantum correspondence has been suggested and investigated from different aspects with various degrees of rigor \cite{Silenko2008,Chen2014,Chen2010,Chen2013,Rubinow1963,Rafa1964,Froh1993,Silenko1995}.

In this article, we consider the case subject to static and homogeneous fields, whereby the $\bigO(\hbar^2)$ corrections arising from the operator ordering and the Darwin term are absent and the FW transformation remains explicitly time-independent and thus in conformity with the standard FW scenario \cite{Chen2014}. Furthermore, we neglect all nonlinear electromagnetic corrections of $\bigO(F_{\mu\nu}^2)$ in the weak-field limit. In these settings, by mathematical induction on the orders of $1/c$ in the power series, we rigourously prove that the conjectured classical-quantum correspondence is \emph{exact}, first for the Dirac equation and then for the Dirac-Pauli equation. (More details and other related issues are presented in a separated paper \cite{Chen2014b}.)

We adopt the method of direct perturbation theory (DPT) \cite{Rutkowski1986a,Rutkowski1986b,Rutkowski1986c,Heully1986}, in the style of Kutzelnigg's implementation \cite{Kutzelnigg1990} with a further simplification scheme introduced in \cite{Chen2013}, to obtain the FW transformation. In Kutzelnigg's implementation of DPT, the FW unitary transformation is assumed to take the form
\begin{equation*}\label{U and Udag}
\tilde{U}=
\left(
  \begin{array}{cc}
    \calY & \calY\calX^\dag \\
    -\calZ\calX & \calZ \\
  \end{array}
\right),
\qquad
\tilde{U}^\dag=
\left(
  \begin{array}{cc}
    \calY & -\calX^\dag\calZ \\
    \calX\calY & \calZ \\
  \end{array}
\right),
\end{equation*}
where the $2\times2$ hermitian operators $\calY$ and $\calZ$ are
\begin{equation*}
\calY=\calY^\dag=\frac{1}{\sqrt{1+\calX^\dag\calX}}, \qquad
\calZ=\calZ^\dag=\frac{1}{\sqrt{1+\calX\calX^\dag}}
\end{equation*}
for some operator $\calX$ to be determined.
The FW transformed Hamiltonian is given by
\begin{equation*}
\tilde{\calH}_\mathrm{FW}
\equiv
\left(
  \begin{array}{cc}
    \calH_\mathrm{FW} & 0 \\
    0 & \bar{\calH}_\mathrm{FW} \\
  \end{array}
\right)
=\tilde{U}\tilde{\calH}\tilde{U}^\dag,
\end{equation*}
where $\bar{\calH}_\mathrm{FW}(\mathbf{x},\boldpi,\boldsigma;q,\mu')
=-\calH_\mathrm{FW}(\mathbf{x},-\boldpi,-\boldsigma;-q,-\mu')$ by $CPT$ symmetries.
For the Dirac-Pauli theory, the block-diagonality of $\tilde{\calH}_\mathrm{FW}$ entails the constraint upon $\calX$ as
\begin{eqnarray}\label{condition calX}
{2mc^2}\calX&=&-\calX c\,\sdotp\,\calX+c\,\sdotp+q[\phi,\calX]\\
&& \mbox{}-i\mu'\sdotE-i\mu'\calX\sdotE\,\calX +\mu'\{\calX,\sdotB\}, \nonumber
\end{eqnarray}
and correspondingly the FW transformed Hamiltonian is given by
\begin{eqnarray}\label{calHFW Kutzelnigg}
\calH_\mathrm{FW}
&=& mc^2 + \sqrt{1+\calX^\dag\calX}\,
\Big(q\phi +c\,\sdotp\calX \nonumber\\
&&\quad\mbox{}-\mu'\sdotB +i\mu'\sdotE\,\calX\Big)
\frac{1}{\sqrt{1+\calX^\dag\calX}}.
\end{eqnarray}
Particularly, for the Dirac theory, \eqnref{condition calX} and \eqnref{calHFW Kutzelnigg} reduce to (by simply setting $\mu'=0$)
\begin{equation}\label{condition X}
2mc^2X=
-Xc\,\sdotp X+c\,\sdotp+q[\phi,X],
\end{equation}
and
\begin{eqnarray}\label{HFW Kutzelnigg}
&& H_\mathrm{FW} \\
&=& mc^2 + \sqrt{1+X^\dag X}\,\left(q\phi+c\,\sdotp X\right)\frac{1}{\sqrt{1+X^\dag X}}, \nonumber
\end{eqnarray}
where we have used the notations $X$ and $H_\mathrm{FW}$ in place of $\calX$ and $\calH_\mathrm{FW}$ when the Dirac-Pauli theory is reduced to the Dirac theory.

First, let us consider the Dirac theory. By expanding $X$ in powers of $c^{-1}$ as
\begin{equation*}
X=\sum_{j=1}^\infty\frac{X_j}{c^j},
\end{equation*}
\eqnref{condition X} yields
\begin{equation}\label{X1 X2}
2mX_1=\sdotp, \qquad 2mX_2=0,
\end{equation}
and the recursion relations (for $j\geq1$):
\begin{subequations}\label{recursion Xj}
\begin{eqnarray}
\label{recursion a}
2mX_{2j}=-\sum_{k_1+k_2=2j-1}\!\!\!X_{k_1}\sdotp X_{k_2}+q[\phi,X_{2j-2}],\qquad\\
\label{recursion b}
2mX_{2j+1}=-\sum_{k_1+k_2=2j}\!\!\!X_{k_1}\sdotp X_{k_2}+q[\phi,X_{2j-1}].\qquad\;
\end{eqnarray}
\end{subequations}
These allow one to compute $X_n$ to any desired order.

Thanks to the high-order calculation conducted in \cite{Chen2013} up to $X_{14}$, we can
conjecture the generic expression of $X_n$ and have the following theorem:
\begin{theorem}\label{thm:1}
In the weak-field limit, we neglect nonlinear terms in $\mathbf{E}$ and $\mathbf{B}$. If the electromagnetic field is homogeneous (thus, $[\pi_i,E_j]=[\pi_i,B_j]=0$), the generic expression for $X_{n\geq0}$ is given by
\begin{subequations}\label{X2j and X2j+1}
\begin{eqnarray}
\label{X2j}
X_{2j}&=&0,\\
\label{X2j+1}
X_{2j+1}&=&a_j\frac{(-1)^j}{(2m)^{2j+1}}(\sdotp)^{2j+1}\nonumber\\
&&\mbox{}
+b_j\frac{iq\hbar(-1)^j}{(2m)^{2j}}\,\boldpi^{2j-2}(\sdotE)\nonumber\\
&&\mbox{}
+c_j\frac{iq\hbar(-1)^j}{(2m)^{2j}}\,\boldpi^{2j-4}(\sdotp)(\Edotp),
\end{eqnarray}
\end{subequations}
where the coefficients are defined as
\begin{subequations}\label{aj bj cj}
\begin{eqnarray}
\label{aj}
a_{j\geq0} &=& \frac{(2j)!}{j!(j+1)!},\\
\label{bj}
b_{j\ge1} &=& \frac{(2j-1)!}{j!(j-1)!} \equiv (2j-1)a_{j-1}, \quad b_{j=0}=0,\quad\\
\label{cj}
c_{j\geq0}&=&2\sum_{j_1+j_2=j}b_{j_1}b_{j_2}, \quad (\text{note},\ c_{j=0,1}=0).
\end{eqnarray}
\end{subequations}
\end{theorem}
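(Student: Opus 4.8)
The plan is to prove \thmref{thm:1} by induction on the order $n$, treating the recursion relations \eqnref{X1 X2} and \eqnref{recursion Xj} as the definition of the $X_n$ and using the two standing hypotheses throughout: truncation to linear order in $\mathbf{E}$ and $\mathbf{B}$, and homogeneity, which furnishes $[\pi_i,E_j]=[\pi_i,B_j]=0$. The base cases $X_1=\sdotp/(2m)$ and $X_2=0$ reproduce \eqnref{X2j and X2j+1} once $a_0=1$ and $b_0=c_0=0$ are inserted. I would first dispose of the even-order terms: assuming $X_{2k}=0$ for all $k<j$, the convolution in \eqnref{recursion a} runs over $k_1+k_2=2j-1$, so every product $X_{k_1}\sdotp X_{k_2}$ pairs an even with an odd index and thus carries a vanishing factor, while $q[\phi,X_{2j-2}]$ vanishes by hypothesis; hence $X_{2j}=0$ and this sub-induction closes.

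With the even terms gone, \eqnref{recursion b} collapses onto odd--odd pairs, $2mX_{2j+1}=-\sum_{j_1+j_2=j-1}X_{2j_1+1}\sdotp X_{2j_2+1}+q[\phi,X_{2j-1}]$. I would insert the inductive ansatz $X_{2k+1}=P_k+Q_k+R_k$, where $P_k\propto(\sdotp)^{2k+1}$ is the field-free part and $Q_k\propto\boldpi^{2k-2}(\sdotE)$, $R_k\propto\boldpi^{2k-4}(\sdotp)(\Edotp)$ are the two linear-in-field parts, and then sort the output by field content, discarding every product of two field-bearing factors as $\bigO(F_{\mu\nu}^2)$. The field-free sector receives contributions only from $P_{j_1}\sdotp P_{j_2}$, and since these multiply to a pure power $(\sdotp)^{2j+1}$ the coefficient match gives the Catalan recursion $a_j=\sum_{j_1+j_2=j-1}a_{j_1}a_{j_2}$, solved by $a_j=(2j)!/[j!(j+1)!]$.

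The substantive work lies in the linear-in-field sector. Here I would reduce each surviving term --- the cross products $P\sdotp Q$, $Q\sdotp P$, $P\sdotp R$, $R\sdotp P$, together with $q[\phi,P_{j-1}]$ (the commutator on $Q_{j-1}$ or $R_{j-1}$ being $\bigO(F_{\mu\nu}^2)$) --- onto the two target structures $\boldpi^{2j-2}(\sdotE)$ and $\boldpi^{2j-4}(\sdotp)(\Edotp)$. The working tools are the identity $(\boldsigma\cdot\mathbf{a})(\boldsigma\cdot\mathbf{b})=\mathbf{a}\cdot\mathbf{b}+i\boldsigma\cdot(\mathbf{a}\times\mathbf{b})$, the homogeneity relations (which to linear order let $\boldpi^2$ and the scalar $\Edotp$ pass through $\sdotp$ and let $\mathbf{E}$ commute with $\boldpi$), and $[\phi,\sdotp]=-i\hbar\sdotE$ coming from $[\phi,\pi_i]=i\hbar\partial_i\phi$. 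The crucial point to verify is that the apparently new structure $\boldsigma\cdot(\boldpi\times\mathbf{E})$, produced whenever $\sdotE$ is carried past a factor of $\sdotp$, never survives: it is always attached to an \emph{odd} power of $\sdotp$, and the identity $[\boldsigma\cdot(\boldpi\times\mathbf{E})](\sdotp)=i[\boldpi^2(\sdotE)-(\sdotp)(\Edotp)]$ folds it back into the admissible basis. Collecting coefficients then delivers two coupled scalar recursions, which I expect to take the form $b_j=2\sum_{j_1+j_2=j-1}a_{j_1}b_{j_2}+a_{j-1}$ and $c_j=2\sum_{j_1+j_2=j-1}a_{j_1}c_{j_2}+2(j-1)a_{j-1}$.

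The closing step is to solve these recursions and match \eqnref{aj bj cj}. I anticipate the principal obstacle to be twofold: keeping the operator-ordering bookkeeping under control so that precisely these recursions emerge (the systematic cancellation of the $\boldsigma\cdot(\boldpi\times\mathbf{E})$ terms being the delicate ingredient), and then identifying the closed forms. For the latter I would pass to generating functions, writing $A(x)=\sum_j a_j x^j=(1-\sqrt{1-4x})/(2x)$ with $A=1+xA^2$, together with $B(x)=\sum_j b_j x^j$ and $C(x)=\sum_j c_j x^j$. The $b$-recursion becomes $B=2xAB+xA$, hence $B=xA/(1-2xA)$, and using $A'=A^2/(1-2xA)$ one reads off $b_j=(2j-1)a_{j-1}$; the $c$-recursion becomes $C=2xAC+2x^2A'$, hence $C=2x^2A'/(1-2xA)=2x^2A^2/(1-2xA)^2=2B^2$, which is exactly $c_j=2\sum_{j_1+j_2=j}b_{j_1}b_{j_2}$. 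Verifying the low-order values $b_1=1,b_2=3,b_3=10$ and $c_2=2,c_3=12$ against both the recursions and the closed forms would guard against sign and combinatorial slips.
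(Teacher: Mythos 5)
Your proposal is correct and follows essentially the same route as the paper: induction on \eqnref{recursion Xj} with the three-structure ansatz, where your anticipated coefficient recursions $b_j=2\sum_{j_1+j_2=j-1}a_{j_1}b_{j_2}+a_{j-1}$ and $c_j=2\sum_{j_1+j_2=j-1}a_{j_1}c_{j_2}+2(j-1)a_{j-1}$ are exactly the paper's combinatorial identities \eqnref{combo identity a}--\eqnref{combo identity c}, and your generating-function solution is the content of the paper's Taylor series \eqnref{series a}--\eqnref{series c}. The only presentational difference is that you derive the commutator bookkeeping (including the folding back of the $\boldsigma\cdot(\boldpi\times\mathbf{E})$ cross terms) inline, whereas the paper packages it as the pre-established identity \eqnref{identity 4}; your sign convention $[\phi,\sdotp]=-i\hbar\,\sdotE$ is the one consistent with the theorem's stated coefficients, as a check of $2mX_3=-X_1\,\sdotp\,X_1+q[\phi,X_1]$ against $b_1=1$ confirms.
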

\begin{proof}
It is trivial to prove \eqnref{X2j} by applying \eqnref{recursion a} on \eqnref{X1 X2} inductively. After knowing $X_{2j}=0$, \eqnref{X2j+1} is proven by mathematical induction via \eqnref{recursion b} with the help of \eqnref{identity 4}, and \eqnref{combo identity a}--\eqnref{combo identity c}.
\end{proof}

Once $X_n$ are known, we can express $X^\dag X$ and $XX^\dag$ in the form of power series. Neglecting nonlinear terms in $F_{\mu\nu}$, we have $\left[c\,\sdotp X, X^\dag X\right] =0$
and, by induction, $[q\phi,(X^\dag X)^n] =n[q\phi,X^\dag X](X^\dag X)^{n-1}$,
which enable us to recast \eqnref{HFW Kutzelnigg} as
\begin{equation*}
H_\mathrm{FW} = mc^2+q\phi -[q\phi,X^\dag X]\frac{1}{2(1+X^\dag X)} +c\,\sdotp X.
\end{equation*}
Consequently, this leads to
\begin{widetext}
\begin{eqnarray}\label{HFW 2}
H_\mathrm{FW} &=& mc^2 + q\phi +c\sum_{j=0}^\infty a_j\frac{(-1)^j}{(2mc)^{2j+1}}\,(\sdotp)^{2j+2}
+q\hbar\sum_{j=1}^\infty b_j\frac{(-1)^j}{(2mc)^{2j}}\,\boldpi^{2j-2} (\Etimesp)\cdot\boldsigma\nonumber\\
&&\mbox{}+iq\hbar\sum_{j=1}^\infty (b_j+c_j)\frac{(-1)^j}{(2mc)^{2j}}\,\boldpi^{2j-2} (\Edotp)
-iq\hbar\left(\sum_{j=0}^\infty (j+1)a_{j+1}\frac{(-1)^j}{(2mc)^{2j+2}}\,
\boldpi^{2j}(\Edotp)\right)\frac{1}{1+X^\dag X}.
\end{eqnarray}
By \eqnref{combo identity a}--\eqnref{combo identity d}, it can be shown that the antihermitian (imaginary) parts in \eqnref{HFW 2} cancel each other out exactly as expected. Then, by \eqnref{series a} and \eqnref{series b}, it follows from \eqnref{HFW 2} that
\begin{equation}\label{HFW 2'}
H_\mathrm{FW} =
 mc^2+q\phi +mc^2\left(\sqrt{1+\left(\frac{\sdotp}{mc}\right)^2}-1\right)
+\frac{q\hbar}{2(mc)^2} \left(\frac{1}{1+\sqrt{1+\left(\frac{\boldpi}{mc}\right)^2}}
-\frac{1}{\sqrt{1+\left(\frac{\boldpi}{mc}\right)^2}}\right)
\boldsigma\cdot(\Etimesp).
\end{equation}
Up to the linear order in $\mathbf{B}$, we have
\begin{equation*}
\sqrt{1+\left(\frac{\sdotp}{mc}\right)^2}
=\sqrt{1+\left(\frac{\boldpi}{mc}\right)^2-\frac{q\hbar}{m^2c^3}\sdotB}
=\sqrt{1+\left(\frac{\boldpi}{mc}\right)^2} \left(1-\frac{1}{2}\frac{q\hbar}{m^2c^3} \frac{\sdotB}{1+\left(\frac{\boldpi}{mc}\right)^2}+\cdots\right).
\end{equation*}
Substituting this back to \eqnref{HFW 2'}, we obtain
\begin{equation}\label{HFW 4}
H_\mathrm{FW} = q\phi +\sqrt{m^2c^4+c^2\boldpi^2} -\frac{q\hbar}{2mc}\frac{1}{\gamma_{\boldpi}}\sdotB
+\frac{q\hbar}{2mc}\left(\frac{1}{\gamma_{\boldpi}}-\frac{1}{1+\gamma_{\boldpi}}\right) \boldsigma\cdot\left(\frac{\boldpi}{mc}\times\mathbf{E}\right),
\end{equation}
\end{widetext}
where $\gamma_{\boldpi}$ is defined as
\begin{equation}\label{gamma pi}
\gamma_{\boldpi}:=\sqrt{1+\left(\frac{\boldpi}{mc}\right)^2}
\equiv \sum_{n=0}^\infty {1/2 \choose n}\left(\frac{\boldpi}{mc}\right)^{2n}
\end{equation}
in accordance with the classical counterpart appearing in \eqnref{H spin}.
The FW transform of the Dirac Hamiltonian given in \eqnref{HFW 4} fully agrees with the classical counterpart \eqnref{H cl}--\eqnref{H spin} with $\mathbf{s}=\frac{\hbar}{2}\boldsigma$ and $\gamma'_m=0$ (or $\gamma_m=\frac{q}{mc}$).

Next, let us study the Dirac-Pauli theory. Consider the power series of $\calX$ in powers of $c^{-1}$:
\begin{equation*}
\calX:=X+X'=\sum_{j=1}^\infty\frac{\calX_j}{c^j}
=\sum_{j=1}^\infty\frac{X_j}{c^j}
+\sum_{j=1}^{\infty}\frac{X'_j}{c^j},
\end{equation*}
where $X$ and $X_j$ have been obtained. The constraint \eqnref{condition calX} together with \eqnref{recursion Xj} leads to
\begin{equation}\label{X'1 X'2 X'3}
X'_1=X'_2=0, \qquad 2mX'_3=-i\mu''\sdotE,
\end{equation}
and the recursion relations for $X'_n$ ($j\geq2$):
\begin{widetext}
\begin{subequations}\label{recursion X'j}
\begin{eqnarray}
\label{recursion a'}
2mX'_{2j}&=&q\left[\phi,X'_{2j-2}\right] +\mu''\left\{X_{2j-3}+X'_{2j-3},\sdotB\right\}
-\sum_{k_1+k_2=2j-1} \left(X_{k_1}\sdotp X'_{k_2}+X'_{k_1}\sdotp X_{k_2}+X'_{k_1}\sdotp X'_{k_2}\right)\nonumber\\
&&\mbox{}-i\mu''\sum_{k_1+k_2=2j-3}\left(X_{k_1}\sdotE\,X_{k_2}+X_{k_1}\sdotE\,X'_{k_2}+X'_{k_1}\sdotE\, X_{k_2}+X'_{k_1}\sdotp X'_{k_2}\right),\\
\label{recursion b'}
2mX'_{2j+1}&=& q\left[\phi,X'_{2j-1}\right] +\mu''\left\{X_{2j-2}+X'_{2j-2},\sdotB\right\}
-\sum_{k_1+k_2=2j} \left(X_{k_1}\sdotp X'_{k_2}+X'_{k_1}\sdotp X_{k_2}+X'_{k_1}\sdotp X'_{k_2}\right)\nonumber\\
&&\mbox{}-i\mu''\sum_{k_1+k_2=2j-2}\left(X_{k_1}\sdotE\,X_{k_2}+X_{k_1}\sdotE\,X'_{k_2}+X'_{k_1}\sdotE\, X_{k_2}+X'_{k_1}\sdotp X'_{k_2}\right).
\end{eqnarray}
\end{subequations}
\end{widetext}
Again, based on the high-order calculation conducted in \cite{Chen2013} up to $X'_{14}$, we have the following theorem:
\begin{theorem}\label{thm:2}
In the weak-field limit, we neglect nonlinear terms in $\mathbf{E}$ and $\mathbf{B}$. If the electromagnetic field is homogeneous (thus, $[\pi_i,E_j]=[\pi_i,B_j]=0$), the generic expression for $X'_{n\geq2}$ is given by
\begin{subequations}\label{X'2j and X'2j+1}
\begin{eqnarray}
\label{X'2j}
X'_{2j}&=&2b_{j-1}\frac{\mu''(-1)^{j}}{(2m)^{2j-2}}\,\boldpi^{2j-4}(\Bdotp),\\
\label{X'2j+1}
X'_{2j+1}&=&b_j\frac{i\mu''(-1)^{j}}{(2m)^{2j-1}}\,\boldpi^{2j-2}(\sdotE)\nonumber\\
&&\mbox{}
+d_j\frac{i\mu''(-1)^{j+1}}{(2m)^{2j-1}}\,\boldpi^{2j-4}(\sdotp)(\Edotp),\qquad
\end{eqnarray}
\end{subequations}
where we define $\mu'':=c\mu'$ for convenience
and the coefficients $d_j$ are defined as
\begin{subequations}\label{dj}
\begin{eqnarray}
d_{j=0}&=&d_{j=1}=0,\\
d_{j\geq2}&=&\sum_{j_1+j_2+j_3=j-2} 2(j_1+1)a_{j_1}a_{j_2}a_{j_3}.
\end{eqnarray}
\end{subequations}
\end{theorem}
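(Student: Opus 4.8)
The plan is to establish \eqnref{X'2j and X'2j+1} by strong induction on the order $n$, in complete parallel with \thmref{thm:1} but now treating the already-known Dirac coefficients $X_j$ of \eqnref{X2j and X2j+1} as inhomogeneous sources in the coupled recursions \eqnref{recursion a'} and \eqnref{recursion b'}. The base cases come from \eqnref{X'1 X'2 X'3}: the claimed formulas return $X'_1=X'_2=0$ because $b_0=0$, and $2mX'_3=-i\mu''\sdotE$ because $b_1=1$ and $d_1=0$. Assuming \eqnref{X'2j and X'2j+1} for all indices below $n$, I would insert the inductive hypotheses together with $X_{2j}=0$ and $X_{2j+1}$ into the relevant recursion, discard every contribution of $\bigO(F_{\mu\nu}^2)$, reduce the Pauli algebra, and read off the coefficient of each surviving structure.

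The key simplification is a clean separation of channels enforced by parity and by the vanishing of the even-order Dirac terms. Since $[\pi_i,E_j]=[\pi_i,B_j]=0$, all fields commute through powers of $\boldpi$, and the only algebraic reductions needed are $\{\sdotp,\sdotB\}=2\Bdotp$ and, to linear order in the fields, $(\sdotp)(\sdotE)(\sdotp)=2(\sdotp)(\Edotp)-\boldpi^2(\sdotE)$. At even order $n=2j$ the index sums force every $\mathbf{E}$-type structure to pair an odd-index source with an even-index $X_{k}$, which vanishes; hence only the real $\Bdotp$ structure survives. At odd order $n=2j+1$ the same mechanism, now together with the fact that the even-index $X'$ carry one power of $\mathbf{B}$, kills every $\Bdotp$ and every nonlinear remnant, leaving only the imaginary $\sdotE$ and $(\sdotp)(\Edotp)$ structures. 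Thus no delicate cross-channel cancellation is required; the work is entirely in the coefficients.

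For $X'_{2j}$ the dominant source is the anticommutator $\mu''\{X_{2j-3},\sdotB\}$, which through $\{\sdotp,\sdotB\}=2\Bdotp$ produces $\boldpi^{2j-4}(\Bdotp)$, while the commutator $q[\phi,X'_{2j-2}]$ and all $X'X'$ products are $\bigO(F_{\mu\nu}^2)$ and drop; the additional $\Bdotp$ pieces generated by the convolutions $\sum X_{k_1}\sdotp X'_{k_2}$ (with even-index $X'_{k_2}$) must then combine with this anticommutator contribution to give the total coefficient $2b_{j-1}$. For $X'_{2j+1}$ the surviving sources are the convolutions $\sum X_{k_1}\sdotp X'_{k_2}$ acting on the inductively known $\sdotE$ and $(\sdotp)(\Edotp)$ parts of the odd-index $X'_{k_2}$, and the explicit double sum $-i\mu''\sum X_{k_1}\sdotE X_{k_2}$; the latter, reduced by $(\sdotp)(\sdotE)(\sdotp)=2(\sdotp)(\Edotp)-\boldpi^2(\sdotE)$, feeds both channels at once, and collecting everything must reproduce the coefficients $b_j$ and $d_j$.

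The main obstacle will be this final coefficient bookkeeping, and specifically the triple-convolution identity underlying $d_j$ in \eqnref{dj}. Whereas \thmref{thm:1} closed with only the quadratic convolutions of the $a_j$ encoded in \eqnref{combo identity a}--\eqnref{combo identity c}, the $(\sdotp)(\Edotp)$ channel here is fed both directly by $\sum X_{k_1}\sdotE X_{k_2}$ (two factors of $a$) and iteratively by $\sum X_{k_1}\sdotp X'_{k_2}$ acting on the $(\sdotp)(\Edotp)$ part of $X'_{k_2}$ (which already carries $a$'s by the inductive hypothesis), so the net weight organizes into the index-weighted triple sum $\sum_{j_1+j_2+j_3=j-2}2(j_1+1)a_{j_1}a_{j_2}a_{j_3}$. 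I expect this to follow, exactly as in \thmref{thm:1}, from the same family of binomial/generating-function identities in \eqnref{combo identity a}--\eqnref{combo identity d} applied one convolution order higher; verifying that these identities indeed close the induction is the crux of the argument.
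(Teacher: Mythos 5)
Your proposal is correct and takes essentially the same route as the paper's proof: induction on the recursions \eqnref{recursion X'j} starting from \eqnref{X'1 X'2 X'3}, with exactly the parity/channel separation and Pauli reductions you describe (the $\{\sdotp,\sdotB\}=2\,\Bdotp$ and $(\sdotp)(\sdotE)(\sdotp)=2(\sdotp)(\Edotp)-\boldpi^2(\sdotE)$ identities), closed by \eqnref{combo identity a} and \eqnref{combo identity b} for the $\Bdotp$ and $\sdotE$ channels and, for the $(\sdotp)(\Edotp)$ channel whose closure you rightly flag as the crux, by precisely \eqnref{combo identity e}. The only cosmetic difference is that you anticipate deriving this last identity from \eqnref{combo identity a}--\eqnref{combo identity d} ``one convolution order higher,'' whereas the paper supplies it ready-made via the generating function \eqnref{series d}.
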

\begin{proof}
The theorem is proven by applying \eqnref{recursion X'j} on \eqnref{X'1 X'2 X'3} inductively with the help of \eqnref{X2j and X2j+1}, \eqnref{combo identity a}, \eqnref{combo identity b}, and \eqnref{combo identity e}.
\end{proof}

Eq.~\eqnref{X'2j and X'2j+1} shows that $X'$ is of the order $\bigO(F_{\mu\nu})$. Consequently, up to $\bigO(F_{\mu\nu})$, \eqnref{calHFW Kutzelnigg} leads to
\begin{eqnarray*}\label{HFW+H'FW}
\calH_\mathrm{FW}
&=& mc^2 + \sqrt{1+X^\dag X}
\left(q\phi +c\,\sdotp X\right)
\frac{1}{\sqrt{1+X^\dag X}}\nonumber\\
&&\mbox{}
+\left(c\,\sdotp X' -\mu'\sdotB +i\mu'\sdotE\,X\right)\nonumber\\
&=:&H_\mathrm{FW}+H'_\mathrm{FW},
\end{eqnarray*}
where the first half part is identified as $H_\mathrm{FW}$ by \eqnref{HFW Kutzelnigg}, and the second half is called $H'_\mathrm{FW}$.
By \eqnref{X2j and X2j+1} and \eqnref{X'2j and X'2j+1}, we have
\begin{widetext}
\begin{eqnarray}\label{H'FW 2}
H'_\mathrm{FW}
&=& -2\mu'\sum_{j=1}^\infty b_j\frac{(-1)^j}{(2mc)^{2j}}\,\boldpi^{2j-2}(\sdotp)(\Bdotp)
+\mu'\left(\sum_{j=1}^\infty b_j\frac{(-1)^j}{(2mc)^{2j-1}}\,\boldpi^{2j-2}
-\sum_{j=0}^\infty a_j\frac{(-1)^j}{(2mc)^{2j+1}}\,\boldpi^{2j}\right)
(\Etimesp)\cdot\boldsigma\nonumber\\
&&\mbox{}-\mu'\sdotB
-i\mu'\sum_{j=0}^\infty \left(b_{j+1}-d_{j+1}+a_j\right) \frac{(-1)^j}{(2mc)^{2j+1}}\,\boldpi^{2j}(\Edotp),
\end{eqnarray}
where nonlinear terms in $F_{\mu\nu}$ have been neglected.
By \eqnref{combo identity f}, we find that the antihermitian part in \eqnref{H'FW 2} vanishes identically. Furthermore, by \eqnref{series a} and \eqnref{series b}, we have
\begin{eqnarray}\label{H'FW 3}
H'_\mathrm{FW}
&=& -\mu'\left(\frac{1}{1+\sqrt{1+\left(\frac{\boldpi}{mc}\right)^2}} -\frac{1}{\sqrt{1+\left(\frac{\boldpi}{mc}\right)^2}}\right)
\frac{(\sdotp)(\Bdotp)}{(mc)^2}
-\mu'\left(\frac{1}{\sqrt{1+\left(\frac{\boldpi}{mc}\right)^2}}\right)
\frac{(\Etimesp)\cdot\boldsigma}{mc}
-\mu'\sdotB\nonumber\\
&=&\mu'\left(\frac{1}{\gamma_{\boldpi}}-\frac{1}{1+\gamma_{\boldpi}}\right) \boldsigma\cdot\frac{\boldpi}{mc}\left(\frac{\boldpi}{mc}\cdot\mathbf{B}\right)
+\mu'\frac{1}{\gamma_{\boldpi}}\boldsigma\cdot\left(\frac{\boldpi}{mc}\times\mathbf{E}\right)
-\mu'\sdotB,
\end{eqnarray}
where $\gamma_{\boldpi}$ is defined in \eqnref{gamma pi}.
With \eqnref{HFW 4} and \eqnref{H'FW 3}, we have
\begin{eqnarray}\label{calHFW}
&&\calH_\mathrm{FW}(\mathbf{x},\mathbf{p},\boldsigma) = H_\mathrm{FW}+H'_\mathrm{FW}
=\sqrt{m^2c^4+c^2\boldpi^2}\,+q\phi(\mathbf{x})\nonumber\\
&&\qquad \mbox{}-\boldsigma\cdot\left[
\left(\mu'+\frac{q\hbar}{2mc}\frac{1}{\gamma_{\boldpi}}\right)\mathbf{B}
-\mu'\frac{1}{\gamma_{\boldpi}(1+\gamma_{\boldpi})}
\left(\frac{\boldpi}{mc}\cdot\mathbf{B}\right)\frac{\boldpi}{mc}
-\left(\mu'\frac{1}{\gamma_{\boldpi}} +\frac{q\hbar}{2mc}\frac{1}{\gamma_{\boldpi}(1+\gamma_{\boldpi})}\right)
\left(\frac{\boldpi}{mc}\times\mathbf{E}\right)
\right],
\end{eqnarray}
which is in complete agreement with the classical counterpart \eqnref{H cl}--\eqnref{H spin} with $\mathbf{s}=\frac{\hbar}{2}\boldsigma$ and $\mu'=\frac{\hbar}{2}\gamma'_m$.
\end{widetext}

This work was supported in part by the Ministry of Science and Technology of Taiwan under the Grants: No.\ 101-2112-M-002-027-MY3, No.\ 101-2112-M-003-002-MY3, and No.\ 101-2112-M-110-013-MY3.

\appendix

\section{Appendix: Useful formulae and lemmas}
The Pauli matrices satisfy the identity
$(\boldsigma\cdot\mathbf{a})(\boldsigma\cdot\mathbf{b}) =\mathbf{a}\cdot\mathbf{b}+i(\mathbf{a}\times\mathbf{b})\cdot\boldsigma$
for arbitrary vectors $\mathbf{a}$ and $\mathbf{b}$.
Meanwhile, we have
$(\boldsymbol{\nabla}\times\mathbf{a}+\mathbf{a}\times\boldsymbol{\nabla})\psi
=(\boldsymbol{\nabla}\times\mathbf{a})\psi$.
Consequently, we have
$(\sdotp)^2=\boldpi^2-\frac{q\hbar}{c}\,\sdotB$.

Neglecting any nonlinear terms in $\mathbf{E}$ and $\mathbf{B}$, by mathematical induction, we have
\begin{subequations}\label{identity 4}
\begin{eqnarray}
\label{identity 4a}
\left[\phi,\,(\sdotp)^{2n}\right]&=&(2n)i\hbar\,\boldpi^{2(n-1)}(\Edotp),\\
\label{identity 4b}
\left[\phi,\,(\sdotp)^{2n+1}\right] &=& i\hbar\,\boldpi^{2n}(\sdotE)\\ &&\mbox{}+(2n)i\hbar\,\boldpi^{2n-2}(\sdotp)(\Edotp).\nonumber
\end{eqnarray}
\end{subequations}

The coefficients defined in \eqnref{aj bj cj} and \eqnref{dj} give the Taylor series:
\begin{subequations}\label{series abcd}
\begin{eqnarray}
\label{series a}
\sum_{j=0}^\infty a_j\frac{(-1)^j}{2^{2j+1}}\,x^{2j+1}
&=&
\frac{x}{1\!+\!\sqrt{1\!+\!x^2}}
\equiv \frac{\sqrt{1\!+\!x^2}\!-\!1}{x},\\
\label{series b}
\sum_{j=1}^\infty b_j\frac{(-1)^j}{2^{2j}}\,x^{2j-2}
&=&
\frac{1}{2}\left(\frac{1}{1\!+\!\sqrt{1\!+\!x^2}}\!-\!\frac{1}{\sqrt{1\!+\!x^2}}\right),\\
\label{series c}
\sum_{j=2}^\infty c_j\frac{(-1)^j}{2^{2j}}\,x^{2j-4}
&=&
\frac{1}{8}\left(\frac{1}{1\!+\!\sqrt{1\!+\!x^2}}\!-\!\frac{1}{\sqrt{1\!+\!x^2}}\right)^2\!,\\
\label{series d}
\sum_{j=2}^\infty d_j\frac{(-1)^{j}}{2^{2j-1}}\,x^{2j-4}
&=&
\frac{1}{\sqrt{1+x^2}} \left(\frac{1}{1+\sqrt{1+x^2}}\right)^2\!,\qquad\;
\end{eqnarray}
\end{subequations}
which lead to the combinatorial identities for $j\geq1$:
\begin{eqnarray}
\label{combo identity a}
\sum_{j_1+j_2=j-1}a_{j_1}a_{j_2}&=&a_j,\\
\label{combo identity b}
2\sum_{j_1+j_2=j-1}a_{j_1}b_{j_2}&=&b_j-a_{j-1}\equiv2(j-1)a_{j-1},\quad\\
\label{combo identity c}
2\sum_{j_1+j_2=j-1}a_{j_1}c_{j_2}&\equiv&4\sum_{j_1+j_2+j_3=j-1}a_{j_1}b_{j_2}b_{j_3}\nonumber\\
=c_j-b_j+a_j&\equiv& c_j-2(j-1)a_{j-1},
\end{eqnarray}
for $j\geq0$:
\begin{eqnarray}
\label{combo identity d}
&&b_{j+1}+c_{j+1}=4b_j+4c_j+a_j,\\
\label{combo identity f}
&&b_{j+1}+a_j=d_{j+1},
\end{eqnarray}
and for $j\geq2$:
\begin{equation}\label{combo identity e}
2\sum_{j_1+j_2=j-1} a_{j_1}d_{j_2} + 2a_{j-1} = d_j.
\end{equation}

\end{document}